\newtheorem{proof}{Proof}
\newtheorem{Proposition}{Proposition}
\def\BibTeX{{\rm B\kern-.05em{\sc i\kern-.025em b}\kern-.08em
    T\kern-.1667em\lower.7ex\hbox{E}\kern-.125emX}}
\begin{document}

\title{MIMO Transmission Under Discrete \\Input Signal Constraints
{}
\thanks{J. Feng, B. Feng, Y. Wu, S. Li, and W. Zhang are with the Department of Electronic Engineering, Shanghai Jiao Tong University, Minhang 200240, China (e-mail: search4meaning@sjtu.edu.cn; fengbiqian@sjtu.edu.cn; yongpeng.wu@sjtu.edu.cn; shen-l@sjtu.edu.cn; zhangwenjun@sjtu.edu.cn)(Corresponding author: Yongpeng Wu.).}
}

\author{\IEEEauthorblockN{Jie Feng, Biqian Feng, Yongpeng Wu, Li Shen, Wenjun Zhang}
}
\maketitle

\begin{abstract}
In this paper, we propose a multiple-input multiple-output (MIMO) transmission strategy that is closer to the Shannon limit than the existing strategies. Different from most existing strategies which only consider uniformly distributed discrete input signals, we present a unified framework to optimize the MIMO precoder and the discrete input signal distribution jointly. First, a general model of MIMO transmission under discrete input signals and its equivalent formulation are established. Next, in order to maximize the mutual information between the input and output signals, we provide an algorithm that jointly optimizes the precoder and the input distribution. Finally, we compare our strategy with other existing strategies in the simulation. Numerical results indicate that our strategy narrows the gap between the mutual information and Shannon limit, and shows a lower frame error rate in simulation.

\end{abstract}

\begin{IEEEkeywords}
Discrete input signals, MIMO, Shannon limit
\end{IEEEkeywords}

\vspace{1.5 ex}
\section{Introduction}

\subsection{Background}
The Shannon limit \cite{Shannon} theoretically specifies the maximum rate of reliable transmission with given power over the additive white Gaussian noise (AWGN) channel. And researchers are seeking various methods to approach the Shannon limit under discrete input signal constraints to improve the spectrum efficiency in practice. However, most researches on increasing the capacity of MIMO communication systems are based on the uniformly distributed discrete input constellation, remaining a gap between the capacity-achieving input which requires Gaussian distribution.

\subsection{Related work}

In single-input single-output (SISO) communication systems, probabilistic shaping methods which can convert the discrete input into a more Gaussian-like distribution have been widely studied. However, many of these shaping schemes [2-4] have the common shortcoming of codec complexity and rate inflexibility. In \cite{PAS}, the probabilistic amplitude shaping (PAS) with low complexity and significant shaping gain is designed to change the distribution of amplitude shift keying (ASK) or quadrature amplitude modulation (QAM) inputs. At the transmitter, a Constant Composition Distribution Matcher (CCDM) \cite{CCDM} and a systematic binary encoder using DVB-S2 low-density parity-check (LDPC) perform PAS and channel coding. At a frame error rate (FER) of $10^{-3}$, the PAS scheme operates within less than 1 dB of the AWGN capacity at any rate between 1 and 5 bits per channel use. 

In MIMO communication systems, over the last decade, most works [7-11] focus on increasing the MIMO capacity by linear precoding under the premise that the discrete input signals are uniformly distributed, leading to a gap from Shannon limit which requires Gaussian inputs \cite{Telatar}. In \cite{Xiao}, globally optimal linear precoders for uniformly distributed discrete signals over complex Gaussian channels are proposed. The main idea of designing the optimal precoder is to jointly optimize a power allocation matrix and a unitary matrix over the equivalent parallelized channel model. Though the optimal precoder proposed in \cite{Xiao} greatly improves the system capacity, it still ignores the limiting factor of the input signals being non-Gaussian distributed. Thus, we  see the potential of applying discrete input signals with non-uniform distributions to improve MIMO capacity.


\newcounter{mytempeqncnt}
\begin{figure*}[!t]
 \normalsize
 \setcounter{mytempeqncnt}{\value{equation}}
 \setcounter{equation}{4}
 \begin{subequations}
 \label{I}
 \begin{align}
\mathcal{I}\mathbf{(x;\bar{y})}&=-\sum_{i=1}^{M^{N_t}}p(\mathbf{x}_i)\left ( \log_{2}{p(\mathbf{x}_i)}
+E_\mathbf{v}\left [ \log_{2}{\left ( \sum_{p=1}^{M^{N_t}}\frac{p(\mathbf{x}_p)}{p(\mathbf{x}_i)}e^{-a_{i,p}} \right ) } \right ] \right ),\\
a_{i,p}&= \sigma^{-2}(\left \| \mathbf{\Sigma_H\Sigma_G\Phi\Delta(x_\mathit{i}-x_\mathit{p})+v} \right \|^2 -\left \| \mathbf{v} \right \|^2 ).
 \end{align}
 \end{subequations}
 \setcounter{equation}{5}
 \hrulefill
 \vspace*{4pt}
\end{figure*}

\subsection{Our work}
The intention of our work is to present a unified framework of MIMO transmission with linear precoders and non-uniformly distributed discrete input signals. Comparing to SISO case, the distribution optimization becomes more complicated with vector input signals. This is because the input distribution of any antenna will affect the optimization of others. Besides, the joint design of the precoder and the input distribution poses another difficulty in the optimization problem.

We first define the original system model and its equivalent form. We use the maximum mutual information criterion to increase the system capacity to further approach the Shannon limit. We formulate the optimization problem and the variables to be optimized including a power allocation matrix, a unitary matrix and the input distribution.

Secondly, the gradient descent algorithm, the manifold optimization, and the coordinate descent algorithm are applied to jointly solve the problem.

Next, we provide simulations to evaluate the transmission strategy based on the proposed framework. Numerical results show that the proposed strategy outperforms the strategy for uniform discrete input signals in \cite{Xiao} over both constant and Rayleigh fading channels. Moreover, the signal-noise-ratio (SNR) gain for FER becomes more obvious for MIMO systems employing minimum mean mquared error (MMSE) detector.

In general, the contributions of our work are summarized as follows:

\begin{itemize}
\item The optimal distribution for a MIMO system with non-uniformly discrete input signals is studied. 
\item A unified framework for a joint MIMO precoder and input distribution optimization is established. 
\item Numerical results indicates that the proposed design outperforms the existing design for uniformly discrete input signals in terms of both mutual information and FER.
\end{itemize}

The rest of this paper is organized as follows. In Section \uppercase\expandafter{\romannumeral2}, we propose our system model. In Section \uppercase\expandafter{\romannumeral3}, we formulate the mutual information maximization problem and provide an algorithm jointly optimizes the variables. Numerical results are provided in Section \uppercase\expandafter{\romannumeral4}. Finally, conclusions are drawn in Section \uppercase\expandafter{\romannumeral5}.

\emph{Notation:} Lowercase boldface letters $\mathbf{a}$ and uppercase boldface letters $\mathbf{A}$ respectively denote column vectors and matrices, $\mathbf{I}_N$ denotes an $N\times N$ identity matrix, diag$(\mathbf{a})$ or diag$(\mathbf{a}^h)$ denotes a diagonal matrix with diagonal entries formed by $\mathbf{a}$, diag$(\mathbf{A})$ denotes a diagonal matrix containing the diagonal of matrix $\mathbf{A}$, $\mathrm{det}(\cdot)$ denotes the matrix determinant, $\mathrm{trace}(\cdot)$ denotes the trace operation, $E_\mathbf{v}[\cdot]$ represents the expectation with respect to random variable $\mathbf{v}$, the Euclidean norm operator is denoted by $\left \| \cdot  \right \|$ and the superscripts $(\cdot)^t$, $(\cdot)^\ast$ and  $(\cdot)^h$ represent transpose, conjugate, and conjugate transpose operations, respectively.

\vspace{1.5 ex}
\section{System Model}
Consider a single-user MIMO Gaussian channel where the transmitter and the receiver are respectively equipped with $N_t$ and $N_r$ antennas. The input-output relationship can be written as
\begin{equation}
\mathbf{y=H G \Delta x+v},\tag{1}
\label{model_1}
\end{equation}
where $\mathbf{y}\in\mathbb{C} ^{N_r\times 1}$ is the received channel output signal, $\mathbf{H}\in\mathbb{C} ^{N_r\times N_t}$ is a random channel matrix denoting the complex fading coefficient of each pair of the transmit and receive antenna, $\mathbf{G}\in\mathbb{C} ^{N_t\times N_t}$ is the linear precoder, and $\mathbf{v}\in\mathbb{C} ^{N_r\times 1}$ is a zero-mean complex Gaussian noise vector with covariance $\sigma^2\mathbf{I}_{N_r}$. Let $\mathbf{x}=\left ( x_1, \dots , x_{N_t} \right ) ^t$ denote the modulated symbol vector. Under the $M$-QAM ($M=2^m$) modulation scheme, for $i$th antenna, the in-phase and quadrature elements of $x_i$ are chosen from the alphabet $\chi=\left \{ \pm 1, \pm3, \dots, \pm\sqrt{M}-1 \right \}$. The constellation scaling matrix $\mathbf{\Delta}=\mathrm{diag}\left ( \Delta_1, \dots , \Delta_{N_t} \right ) $ is applied to satisfy the unit power constraint $E_{\mathbf{x} }[\mathbf{\Delta x (\Delta x)}^h]=\mathbf{I}_{N_t}$. And the overall transmitting power constraint is satisfied by $\mathrm{trace}(\mathbf{GG^\mathit{h}})=P$.

Consider a deterministic channel $\mathbf{H}$ with SVD decomposition $\mathbf{H}=\mathbf{U_H \Sigma _H V_H^\mathit{h}}$ which is perfectly known at the transceiver. From \cite[Prop. 2]{opU}, when designing the optimal precoder with SVD decomposition $\mathbf{G}=\mathbf{U_G \Sigma _G \Phi}$, for any input distribution of $\mathbf{\Delta x}$, the left singular matrix of $\mathbf{G}$ can always be chosen to be the right singular vectors of $\mathbf{H}$, i.e., $\mathbf{U_G}=\mathbf{V_H}$. Then, based on \cite[Prop. 1]{Xiao}, let $\mathbf{\bar{y}}=\mathbf{U_H^\mathit{h}}\mathbf{y}$ and model \eqref{model_1} can be simplified as the following equivalent model with $\mathcal{I}\mathbf{(x; \bar{y})}= \mathcal{I}\mathbf{(x; y)}$
\begin{equation}
\mathbf{\bar{y}= \Sigma _H \Sigma _G \Phi \Delta x + v},\tag{2}
\label{model_2}
\end{equation}
where $\mathbf{\Sigma_H}$ and $\mathbf{\Sigma_G}$ are diagonal matrices with nonnegative elements, $\mathbf{\Phi}$ is a unitary matrix and $\mathbf{v}$ is the noise vector with its statistical information unchanged.

To find the optimal MIMO transmitting strategy, we aim to maximize the mutual information $\mathcal{I} \mathbf{(x; \bar{y})}$ which is positively correlated with FER performance. For given $\mathbf{\Sigma _H}$ and $ \mathbf{\Sigma _G}$, the probability density function of $\mathbf{\bar{y}}$ can be computed by
\begin{equation}
p(\mathbf{\bar{y}})=E_\mathbf{x}[p(\mathbf{\bar{y}|x})]=\sum_{i=1}^{M^{N_t}}p(\mathbf{x_\mathit{i}})p(\mathbf{\bar{y}|x_\mathit{i}}) ,\tag{3}
\label{py}
\end{equation}
\begin{equation}
p(\mathbf{\bar{y}|x})=\frac{1}{(\pi\sigma^2)^{N_r}}\mathrm{exp}\left(\frac{\left \|  \mathbf{\bar{y}-\Sigma_H\Sigma_G\Phi \Delta x}  \right \|^2 }{\sigma^2}\right).\tag{4}
\label{pyx}
\end{equation}
Then, $\mathcal{I} \mathbf{(x; \bar{y})}$ is given by \eqref{I}, where the probability density function of $\mathbf{v}$ is given by

\begin{equation}
p(\mathbf{v})=\frac{1}{(\pi\sigma^2)^{N_r}}\exp\left ( -\frac{\left \| \mathbf{v}\right \|^2 }{\sigma^2} \right ).
\label{pv}
\end{equation}

After presenting the framework of MIMO transmission with non-uniformly distributed input signals, we aim to design an effective strategy to maximize the mutual information. The corresponding optimization problem is formulated as:
	\begin{subequations}
	\begin{eqnarray}
	\underset{\mathbf{\Sigma}_{\mathbf{G}},\mathbf{\Phi},p\left(\mathbf x\right)}{\max} &&\mathcal I\left(\mathbf x;\bar{\mathbf y}\right)\\
	\operatorname{s.t.}&&\mathbf{\Phi}\mathbf{\Phi}^h=\mathbf I_{N_t},\label{constraintPhi}\\
	&&\operatorname{trace}\left(\mathbf{\Sigma}_{\mathbf{G}}\mathbf{\Sigma}_{\mathbf{G}}^h\right)=P, \mathbf{\Sigma_G \succeq 0},\label{constraintSigmaG}\\
	&&E_{\mathbf x}\left[\mathbf \Delta\mathbf x\left(\mathbf \Delta\mathbf x\right)^h\right]=\mathbf I_{N_t}.\label{constraintDelta1}
	\end{eqnarray}
	\end{subequations}

\begin{figure*}[!t]
 \normalsize
 \setcounter{mytempeqncnt}{\value{equation}}
 \setcounter{equation}{7}
 \begin{subequations}
 \label{GDG}
 \begin{align}
\frac{\partial\mathcal{I}\mathbf{(x;\bar{y})}}{\partial \mathbf{{\Sigma_G}^\ast }}&=\mathrm{diag}\left ( \sum_{i=1}^{M^{N_t}}p(\mathbf{x}_i)E_\mathbf{v}\left [  \frac{\sum_{p=1}^{M^{N_t}}{p(\mathbf{x}_p)e^{-b_{i,p}}c_{i,p}}}{\sum_{p=1}^{M^{N_t}}{p(\mathbf{x}_p)e^{-b_{i,p}}}}\right ] \right ),\\
b_{i,p}&=\sigma^{-2}\left \|\mathbf{ \Sigma_H \Sigma_G\Phi\Delta(x_\mathit{i}-x_\mathit{p})+v }\right \|^2,\\
c_{i,p}&=\sigma^{-2}\mathbf{(\Sigma_H^2 \Sigma_G\Phi\Delta(x_\mathit{i}-x_\mathit{p})(x_\mathit{i}-x_\mathit{p})^\mathit{h}\Delta\Phi^{\mathit{h}}+\Sigma_Hv (x_\mathit{i}-x_\mathit{p})^\mathit{h}\Delta\Phi^{\mathit{h}})}.
 \end{align}
 \end{subequations}
 \setcounter{equation}{8}
 \hrulefill
\end{figure*}

For convenience, we use the notations of the input distribution $p\mathbf{(x)}$ and the constellation scaling matrix $\mathbf{\Delta}$ interchangeably since they are determined by each other in (\ref{constraintDelta1}). Obviously, there exists a feasible point $\mathbf{\Sigma}_{\mathbf{G}}=\sqrt{\frac{P}{N_t}}\mathbf I_{N_t}, \mathbf \Phi=\mathbf I_{N_t}, \mathbf \Delta=\sqrt{\frac{3}{2\left(M-1\right)}}\mathbf I_{N_t}$ satisfying the constraints (\ref{constraintPhi})-\eqref{constraintDelta1}, which verifies the feasible of the problem \cite[\S 4.1.1]{ConvexOptimization}. However, it is worth noting that the optimization problem is quite challenging due to the following reasons. First, computing the objective function of $\mathcal I\left(\mathbf x;\bar{\mathbf y}\right)$ with expectation term is computationally expensive, and so is the constraint \eqref{constraintDelta1}. Secondly, the objective function is non-concave with respect to $\mathbf \Sigma_{\mathbf G}$, $\mathbf \Phi$, and $\mathbf \Delta$. Additionally, the presence of the unitary matrix $\mathbf \Phi$ further complicates the optimization procedure.
\vspace{1.5 ex}
\section{Mutual Information Optimization}

In this section, we develop an efficient algorithm for the problem which optimizes the power allocation matrix $\mathbf{\Sigma_G}$,  unitary matrix $\mathbf{\Phi}$,  and the distribution of input signal $\mathbf{\Delta x}$ in an alternating manner. Specifically, we solve it by solving the following three subproblems iteratively: optimize $\mathbf{\Sigma_G}$ with given $\mathbf{\Phi}$ and $\mathbf{\Delta x}$, optimize $\mathbf{\Phi}$ with given $\mathbf{\Sigma_G}$  and $\mathbf{\Delta x}$, and optimize $\mathbf{\Delta x}$ with given $\mathbf{\Sigma_G}$ and $\mathbf{\Phi}$.  Then we present the overall algorithm and show its convergence.

\subsection{Optimize power allocation matrix $\mathbf{\Sigma_G}$}
Though mercury-waterfilling proposed in \cite{MWF} is the optimal power allocation policy for parallel channels, it can not be applied here because of  $\mathbf{\Phi}$.
To handle the problem, we adopt the gradient projection method \cite[\S 3.3.1]{GradientProjection} to optimize $\mathbf \Sigma_{\mathbf G}$ with fixed $\mathbf\Phi$ and $\mathbf \Delta$. The subproblem is given by
	\begin{subequations}
	\begin{eqnarray}
	\underset{\mathbf{\Sigma}_{\mathbf{G}}}{\max} &&\mathcal I\left(\mathbf x;\bar{\mathbf y}\right)\\
	\operatorname{s.t.}&&\operatorname{trace}\left(\mathbf{\Sigma}_{\mathbf{G}}\mathbf{\Sigma}_{\mathbf{G}}^h\right)=P,\mathbf{\Sigma_G \succeq 0}\label{constraintSigmaG1}.
	\end{eqnarray}
	\end{subequations}

 \begin{Proposition}
 The projection of the result $\hat{\mathbf \Sigma}_{\mathbf G}$ onto a set $\mathcal Z\triangleq \left\{\mathbf\Sigma_{\mathbf G}:\operatorname{trace}\left(\mathbf{\Sigma}_{\mathbf{G}}\mathbf{\Sigma}_{\mathbf{G}}^h\right)<  P\right\}$ is given by
 \begin{equation}
 \begin{aligned}
 \mathbf\Pi_{\mathcal Z}[\hat{\mathbf \Sigma}_{\mathbf G}]&=\underset{\mathbf  \Sigma_{\mathbf G}\in\mathcal Z }{\arg\min}\,\,\|\mathbf \Sigma_{\mathbf G}-\hat{\mathbf \Sigma}_{\mathbf G}\|^2\\
 &=\sqrt{\frac{P}{\operatorname{trace}\left[ \left[\hat{\mathbf\Sigma}_{\mathbf G}\right]^+\left[\hat{\mathbf\Sigma}_{\mathbf G}^h\right]^+ \right]}}\left[\hat{\mathbf\Sigma}_{\mathbf G}\right]^+,
 \end{aligned}
 \label{projection}
 \end{equation}
 where $\left[\mathbf{X}\right]^+$ denotes the projection of $\mathbf{X}$ onto the positive semidefinite cone.
 \end{Proposition}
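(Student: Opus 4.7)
\medskip
\noindent\emph{Proof plan.} The strategy is to reduce \eqref{projection} to a separable scalar least-squares program and then extract the stated closed form via the KKT conditions; the trace bound is treated as the active constraint $\operatorname{trace}(\mathbf\Sigma_{\mathbf G}\mathbf\Sigma_{\mathbf G}^h)=P$, consistent with the original problem \eqref{constraintSigmaG}. First I would observe that the gradient expression displayed above is itself diagonal, so the iterate $\hat{\mathbf\Sigma}_{\mathbf G}$ inherits the diagonal Hermitian structure of $\mathbf\Sigma_{\mathbf G}$; writing $\hat{\mathbf\Sigma}_{\mathbf G}=\operatorname{diag}(\hat\lambda_1,\dots,\hat\lambda_{N_t})$ with real $\hat\lambda_i$, the projection decouples into
\[
\min_{\lambda_1,\dots,\lambda_{N_t}}\ \sum_{i=1}^{N_t}(\lambda_i-\hat\lambda_i)^2\ \text{ subject to }\ \lambda_i\ge 0,\ \sum_{i=1}^{N_t}\lambda_i^2=P.
\]

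Next I would apply KKT. Introducing the Lagrangian
\[
L=\sum_i(\lambda_i-\hat\lambda_i)^2-\sum_i\mu_i\lambda_i+\nu\Bigl(\sum_i\lambda_i^2-P\Bigr),
\]
stationarity gives $(1+\nu)\lambda_i=\hat\lambda_i+\tfrac12\mu_i$, and complementary slackness $\mu_i\lambda_i=0$ splits naturally into two cases: when $\hat\lambda_i\ge 0$, setting $\mu_i=0$ yields $\lambda_i=\hat\lambda_i/(1+\nu)$; when $\hat\lambda_i<0$, the only admissible choice is $\lambda_i=0$ with $\mu_i=-2\hat\lambda_i\ge 0$. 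Both outcomes are captured uniformly by $\lambda_i=\hat\lambda_i^{+}/(1+\nu)$, showing that the optimizer must be a nonnegative scalar multiple of $[\hat{\mathbf\Sigma}_{\mathbf G}]^+$.

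I would then pin down $1+\nu$ from the active trace constraint $\sum_i(\hat\lambda_i^{+})^2/(1+\nu)^2=P$, which gives $1+\nu=\sqrt{\operatorname{trace}\!\left([\hat{\mathbf\Sigma}_{\mathbf G}]^+[\hat{\mathbf\Sigma}_{\mathbf G}^h]^+\right)/P}$, and substituting back recovers exactly the formula claimed in \eqref{projection}.

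The step I expect to be the main obstacle is the implicit commutation behind the statement: projection onto the intersection of two convex sets is, in general, \emph{not} the composition of the individual projections, so one must justify that ``clip the negative entries, then isotropically rescale to the sphere'' truly lands at the joint minimizer rather than merely at some feasible point. The KKT derivation above is precisely what buys this guarantee, because it couples the nonnegativity multipliers $\mu_i$ to the single trace multiplier $\nu$ in one shot, eliminating the need for any separate argument about the order of projections.
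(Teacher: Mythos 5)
Your proof is correct, and it is in fact tighter than the one in the paper. The paper forms a partial Lagrangian containing only the trace constraint, obtains $\mathbf\Pi_{\mathcal Z}[\hat{\mathbf \Sigma}_{\mathbf G}]=\frac{1}{1+\lambda}\hat{\mathbf\Sigma}_{\mathbf G}$ from stationarity, and then simply declares that the result ``is projected onto the positive semidefinite cone,'' inserting $[\hat{\mathbf\Sigma}_{\mathbf G}]^+$ by fiat before normalizing to the trace sphere. That is exactly the sequential-projection step whose validity you flag as the main obstacle: the paper never argues that composing the two projections lands on the joint minimizer. Your route --- exploiting the diagonal structure of the iterate to decouple into scalars, then running the full KKT system with both the per-entry nonnegativity multipliers $\mu_i$ and the single trace multiplier $\nu$ --- supplies precisely the missing justification, since complementary slackness forces $\lambda_i=\hat\lambda_i^{+}/(1+\nu)$ in one shot. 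Two small caveats, both shared with the paper: the set $\mathcal Z$ is written with a strict inequality (an open set, onto which the projection need not be attained), and once you replace it by the equality $\sum_i\lambda_i^2=P$ the feasible set is a sphere intersected with the orthant, so KKT conditions are only necessary; a one-line remark that the unique KKT point with $1+\nu>0$ is the global minimizer (or that you are really projecting onto the closed ball and the constraint is active whenever $\operatorname{trace}\bigl[[\hat{\mathbf\Sigma}_{\mathbf G}]^+[\hat{\mathbf\Sigma}_{\mathbf G}^h]^+\bigr]\ge P$) would make the argument airtight. Neither caveat is a gap relative to what the paper itself proves.
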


 \begin{proof}
 In the following we solve the above problem via the partial Lagrangian function.
 \begin{equation}
 \begin{aligned}
 \mathcal{L}\left(\mathbf \Sigma_{\mathbf G},\lambda\right)&\!=\!\|\mathbf \Sigma_{\mathbf G}\!\!-\!\!\hat{\mathbf \Sigma}_{\mathbf G}\|^2+\lambda\left[\operatorname{trace}\left(\mathbf{\Sigma}_{\mathbf{G}}\mathbf{\Sigma}_{\mathbf{G}}^h\right)\!-\!P\right].
 \end{aligned}
 \end{equation}
 The dual function is given by $g(\lambda)=\mathop{\inf}\limits_{\mathbf \Sigma_{\mathbf G}}\mathcal{L}\left(\mathbf \Sigma_{\mathbf G},\lambda\right)$. Since $\mathcal{L}\left(\mathbf \Sigma_{\mathbf G},\lambda\right)$ is a convex function of $\mathbf \Sigma_{\mathbf G}$, we can find the optimal matrices $\mathbf \Sigma_{\mathbf G}$ from the optimality condition
 \begin{equation}
 \begin{aligned}
 \nabla_{\mathbf \Sigma_{\mathbf G}^*} \mathcal{L}\left(\mathbf \Sigma_{\mathbf G},\lambda\right)=\left(1+\lambda\right)\mathbf\Sigma_{\mathbf G}-\hat{\mathbf\Sigma}_{\mathbf G}=\mathbf{0},
 \end{aligned}
 \end{equation}
 which yields $\mathbf\Pi_{\mathcal Z}[\hat{\mathbf \Sigma}_{\mathbf G}]=\frac{1}{1+\lambda}\hat{\mathbf\Sigma}_{\mathbf G}$. Then, $\mathbf\Sigma_{\mathbf G}$ is projected onto the positive semidefinite cone, which leads to the desired $\mathbf\Pi_{\mathcal Z}[\hat{\mathbf \Sigma}_{\mathbf G}]=\frac{1}{1+\lambda}\left[\hat{\mathbf\Sigma}_{\mathbf G}\right]^+$.

 Putting the closed-form solution into the constraint, we have
 \begin{equation}
 \operatorname{trace}\left(\mathbf\Pi_{\mathcal Z}[\hat{\mathbf \Sigma}_{\mathbf G}]\mathbf\Pi_{\mathcal Z}[\hat{\mathbf \Sigma}_{\mathbf G}]^h\right)\!\!=\!\!\frac{\operatorname{trace}\left[\left[\hat{\mathbf\Sigma}_{\mathbf G}\right]^+\!\!\left[\hat{\mathbf\Sigma}_{\mathbf G}^h\right]^+\!\! \right]}{\left(1+\lambda\right)^2}=P.
 \end{equation}
 Therefore,
 \begin{equation} \mathbf\Pi_{\mathcal Z}[\hat{\mathbf \Sigma}_{\mathbf G}]\!\!=\!\!\frac{\left[\hat{\mathbf\Sigma}_{\mathbf G}\right]^+}{1\!\!+\!\!\lambda}\!\!=\!\!\!\!\sqrt{\frac{P}{\operatorname{trace}\left[\ \left[\hat{\mathbf\Sigma}_{\mathbf G}\right]^+ \!\!\left[\hat{\mathbf\Sigma}_{\mathbf G}^h\right]^+ \right]}}\left[\hat{\mathbf\Sigma}_{\mathbf G}\right]^+.
 \end{equation}
 \end{proof}
See complete steps for optimizing $\mathbf{\Sigma_G}$ in \textbf{Algorithm 1}.

\begin{algorithm}[tb]
\caption{The gradient descent algorithm for optimizing $\mathbf{\Sigma_G}$ with given $\mathbf{\Sigma_H, \Phi}$ and $p\mathbf{(x)}$.}
    \begin{algorithmic}[1]
    \State Initialize $k=0$ and $\mathbf{\Sigma_G}_k=\sqrt{\frac{P}{N_t}}\mathbf{I}_{N_t}$.
    \State Compute the gradient $\mathbf{\Gamma}_k$ for $\mathbf{\Sigma_G}$ in (8).
    \State Update $\mathbf{\Sigma_G}_{k+1}$ by $\mathbf{\Sigma_G}_k+\mu[\mathbf{\Gamma}_k-\mathrm{trace}(\mathbf{\Gamma}_k/N_t)]$ with step size $\mu$ determined by the backtracking line search.
    \State Update $\mathbf{\Sigma_G}_{k+1}$ according to  (10). Compute the mutual information $\mathcal{I}_{k+1}$. $k:=k+1$.
    \State Repeat step 2-4 until step size $\mu$ approaches zero.
    \end{algorithmic}
\end{algorithm}

\subsection{Optimize unitary matrix $\mathbf{\Phi}$}

For given $\mathbf \Sigma_{\mathbf G}$ and $\mathbf \Delta$, we can rewrite the initial problem as
	\begin{subequations}

	\begin{eqnarray}
	\underset{\mathbf{\Phi}}{\max} &&\mathcal I\left(\mathbf x;\bar{\mathbf y}\right)\\
	\operatorname{s.t.}&&\mathbf{\Phi}\mathbf{\Phi}^h=\mathbf I_{N_t}.\label{constraintPhi2}
	\end{eqnarray}
	\end{subequations}
	Some relax-based algorithms hardly guarantee a stationary point under unitary matrix constraint. Here, we introduce a manifold optimization algorithm to overcome this drawback \cite{manifold}. The constraint defines an Stiefel manifold which can be characterized by
	\begin{equation}
	\mathcal S=\left\{\mathbf\Phi:\mathbf\Phi\mathbf\Phi^h=\mathbf I_{M_t}\right\}
	\end{equation}

	The tangent space of the Stiefel manifold $\mathcal S$ at each point $\mathbf\Phi$ is identified as the matrix space $\left\{\mathbf X:\mathbf X^h\mathbf\Phi+\mathbf \Phi^h\mathbf X=\mathbf 0\right\}$. In this paper, we set the Riemannian metric as $\langle\mathbf Z_1,\mathbf Z_2\rangle_{\mathbf\Phi}=\frac{1}{2}\mathcal{R}\left\{\operatorname{trace}\left(\mathbf Z_1-\mathbf Z_2\right)\right\}$. The Riemannian gradient at a point $\mathbf \Phi$ is
	\begin{equation}
	\nabla_{\mathbf\Phi^{*}} \mathcal{I}(\mathbf{x} ; \bar{\mathbf{y}})=\mathbf \Gamma_{\mathbf\Phi}-\mathbf\Phi\mathbf \Gamma_{\mathbf\Phi}^h\mathbf\Phi
	\end{equation}
	where $\mathbf \Gamma_{\mathbf\Phi}=\frac{\partial \mathcal{I}(\mathbf{x} ; \bar{\mathbf{y}})}{\partial \mathbf\Phi^{*}}$ is the gradient on the Euclidean space at a given $\mathbf \Phi$.
	
	Furthermore, the unitary optimization can be solved in an iterative manner by using a steepest descent algorithm, and the corresponding rotational update at iteration $k$ is given by:
	\begin{equation}
	\mathbf \Phi_{k+1}=\exp\left(-\mu_k\mathbf G_k\right)\mathbf\Phi_k
	\end{equation}
	where $\mathbf G_k\triangleq\nabla_{\mathbf\Phi^{*}} \mathcal{I}(\mathbf{x} ; \bar{\mathbf{y}})\mathbf W_k^H=\mathbf \Gamma_{\mathbf\Phi}\mathbf\Phi_k^h-\mathbf\Phi_k\mathbf \Gamma_{\mathbf\Phi}^h$. The step size $\mu_k$ controls the convergence speed and needs to be computed at each iteration. Backtrack line search based on Armijo-Goldstein condition is efficiently used \cite{SD}. More details are provided in \textbf{Algorithm 2}.

\begin{algorithm}[tb]
\caption{The SD algorithm on the Riemannian space for optimizing $\mathbf{\Phi}$ with given $\mathbf{\Sigma_H, \Sigma_G}$ and $p\mathbf{(x)}$.}
    \begin{algorithmic}[1]
    \State Initialize $k=0$ and $\mathbf{\Phi}_k=\mathbf{I}_\mathit{N_t}$ .
    \State Compute the gradient $\mathbf{\Gamma}_k$ for $\mathbf{\Phi}$ on the Euclidean space in \eqref{SDV}.
    \State Compute the gradient direction on the Riemannian space: $\mathbf{R_\mathit{k}=\Gamma_\mathit{k}\Phi\mathit{_k^h}-\Phi_\mathit{k}\Gamma\mathit{_k^h}}$.
    \State Determine the rotation matrix: $\mathbf{P}_k=\mathrm{exp}(\mu \mathbf{R}_k)$ with step size $\mu$ determined by the backtracking line search.
    \State Update $\mathbf{\Phi}_{k+1}=\mathbf{P_k \Phi}_k$ and compute the mutual information $\mathcal{I}_{k+1}$. $k:=k+1$.
    \State Repeat step 2-4 until step size $\mu$ approaches zero.
    \end{algorithmic}
\end{algorithm}

\begin{figure*}[!t]
	\normalsize
	\setcounter{mytempeqncnt}{\value{equation}}
	\setcounter{equation}{18}
	\begin{subequations}
		\label{SDV}
		\begin{align}
		\frac{\partial\mathcal{I}\mathbf{(x;\bar{y})}}{\partial \mathbf{\Phi^\ast }}&=\sum_{i=1}^{M^{N_t}}p(\mathbf{x}_i)E_\mathbf{v}\left [  \frac{\sum_{p=1}^{M^{N_t}}{p(\mathbf{x}_p)e^{-b_{i,p}}d_{i,p}}}{\sum_{p=1}^{M^{N_t}}{p(\mathbf{x}_p)e^{-b_{i,p}}}}\right ],\\
		d_{i,p}&=\sigma^{-2}\mathbf{(\Sigma_H^2 \Sigma_G^2\Phi\Delta(x_\mathit{i}-x_\mathit{p})(x_\mathit{i}-x_\mathit{p})^\mathit{h}+\Sigma_H \Sigma_G v (x_\mathit{i}-x_\mathit{p})^\mathit{h}\Delta)}.
		\end{align}
	\end{subequations}
	\setcounter{equation}{19}
	\hrulefill
\end{figure*}

\subsection{Optimize signal distribution $p(\mathbf{x})$}
For given $\mathbf{\Sigma_G}$ and $\mathbf{\Phi}$, the subproblem is formulated as
	\begin{subequations}
	\begin{eqnarray}
	\underset{p(\mathbf{x})}{\max} &&\mathcal I\left(\mathbf x;\bar{\mathbf y}\right)\\
	\operatorname{s.t.}
	&&E_{\mathbf x}\left[\mathbf \Delta\mathbf x\left(\mathbf \Delta\mathbf x\right)^h\right]=\mathbf I_{N_t}.\label{constraintDelta}
	\end{eqnarray}
\end{subequations}

Under the premise that the signals from different antennas are independent of each other, we first introduce how the distribution of a single input stream is optimized, and then present the coordinate descent algorithm to optimize the distribution $p(\mathbf{x})$ alternatively.

For each antenna, the relation between constellation scaling $\Delta_j$ and the input distribution can be derived by maximizing the input entropy under the unit power constraint. To maximize the input entropy, the input signal should be subject to Maxwell-Boltzmann distributions \cite{MB} expressed as
\begin{equation}
p(x_i)=A e^{\lambda\left \| x_i \right \|^2 },A=\frac{1}{\sum_{i=1}^Me^{\lambda\left \| x_i \right \|^2}}.
\label{MB}
\end{equation}
To satisfy the power constrain $E_{\mathbf{x} }[\mathbf{\Delta x (\Delta x)}^h]=\mathbf{I}_{N_t}$, parameter $\lambda_j$ for $j$th antenna should satisfy the following equation
\begin{equation}
A_j\sum_{i=1}^Me^{\lambda_j\left \| x_i \right \|^2}\left \| x_i \right \|^2=\Delta_j^{-2},j=1, \dots,N_t.
\label{delta}
\end{equation}
For a given $\Delta_j$, the distribution of QAM symbols for $j$th antenna can be determined by \eqref{delta} through Newton's method. Thus, we can traverse the desirable range of $\Delta_j$ with a small step size to get an maximum gain on the mutual information. Based on the single stream optimization, we propose coordinate descent algorithm to optimize $p(\mathbf{x})$ for a vector input as shown in \textbf{Algorithm 3}.

\begin{algorithm}[tb]
\caption{The Coordinate Descent algorithm for optimizing $p(\mathbf{x})$ with given $\mathbf{\Sigma_H, \Sigma_G}$ and $\mathbf{\Phi}$.}
    \begin{algorithmic}[1]
    \State Initialize $k=0, \mathbf{\Delta}=\sqrt{\frac{3}{2(M-1)}}\mathbf{I}_{N_t}$. Compute the mutual information $\mathcal{I}_k$ with uniform input distribution. Compute the desirable range of constellation scaling: $\frac{1}{\sqrt{2}(\sqrt{M}-1)}\le \Delta_j \le \frac{1}{\sqrt{2}},j=1,\dots, N_t$. Decide the step size $\delta$ and threshold $\varepsilon $.
    \State For each $\Delta_j, j=1,\dots,N_t,$ traverse the feasible interval with step size $\delta$ and update $\Delta_j$ to the value that maximizes the mutual information. Compute the mutual information $\mathcal{I}_{k+1}$ after $\mathbf{\Delta}$ is updated, $u=\mathcal{I}_{k+1}-\mathcal{I}_k.$ $k:=k+1$.
    \State Repeat step 2 until $u\le \varepsilon$.
    \end{algorithmic}
\end{algorithm}

Based on the algorithms that optimize variables separately, we finally propose \textbf{Algorithm 4} to jointly optimize the mutual information $\mathcal{I}(\mathbf{x;\bar{y}})$ over the model \eqref{model_2}. Recall that the objective function
is monotonically non-increasing after each iteration of
\textbf{Algorithm 1-3}. Therefore, the proposed alternating
optimization algorithm is guaranteed to converge to a
suboptimal solution.

\begin{algorithm}[tb]
\caption{Joint optimization for $\mathcal{I}(\mathbf{x;\bar{y}})$}
    \begin{algorithmic}[1]
    \State For a given channel $\mathbf{H}$, apply SVD to get $\mathbf{\Sigma_H}$ and convert the original system model \eqref{model_1} into model \eqref{model_2}. Initialize $\mathbf{\Sigma_G}=\sqrt{\frac{P}{N_t}}\mathbf{I}_{N_t}, \mathbf{\Phi=I_{\mathit{N_t}}},\mathbf{\Delta}=\sqrt{\frac{3}{2(M-1)}}\mathbf{I}_{N_t}$, k=0 and compute $\mathcal{I}_k$. Compute the desirable range of constellation scaling: $\frac{1}{\sqrt{2}(\sqrt{M}-1)}\le \Delta_j \le \frac{1}{\sqrt{2}},j=1,\dots, N_t$. Decide the step size $\delta$ and threshold $\varepsilon $.
    \State Optimize $\mathbf{\Phi}$ according to \textbf{Algorithm 2}.
    \State Optimize $\mathbf{\Sigma_G}$ according to \textbf{Algorithm 1}.
    \State Optimize $p(\mathbf{x})$ according to \textbf{Algorithm 3}. Compute the mutual information $\mathcal{I}_{k+1}$ after $\mathbf{\Delta}$ is updated, $u=\mathcal{I}_{k+1}-\mathcal{I}_k.$ $k:=k+1$.
    \State Repeat step 2-4 until $u\le \varepsilon$.
    \end{algorithmic}
\end{algorithm}

\section{Simulation Results}
In this section, we evaluate our transmitting strategy in simulations, and compare it with the strategy proposed in \cite{Xiao}. The SNR is defined as $\mathrm{SNR=trace}(\mathbf{\Sigma_G}^2)/(N_r\sigma^2$). All simulations are performed under 16-QAM over $2\times2$ MIMO channels. As for massive MIMO, similar simulations can be done by blocking \cite{block}. Both mutual information results and FER performance are presented.

\subsection{Constant MIMO channel}
In this subsection, we consider a $2\times2$ MIMO system with the static channel matrix $\mathbf{H}=\begin{bmatrix}
  2&1 \\
  1&2
\end{bmatrix}$, whose singular value matrix can be normalized as $\mathbf{\Sigma_H}=\begin{bmatrix}
  1.3416&0 \\
  0&0.4472
\end{bmatrix}$. 

\begin{figure}[tbp]
	\centerline{\includegraphics[width=3.3in]{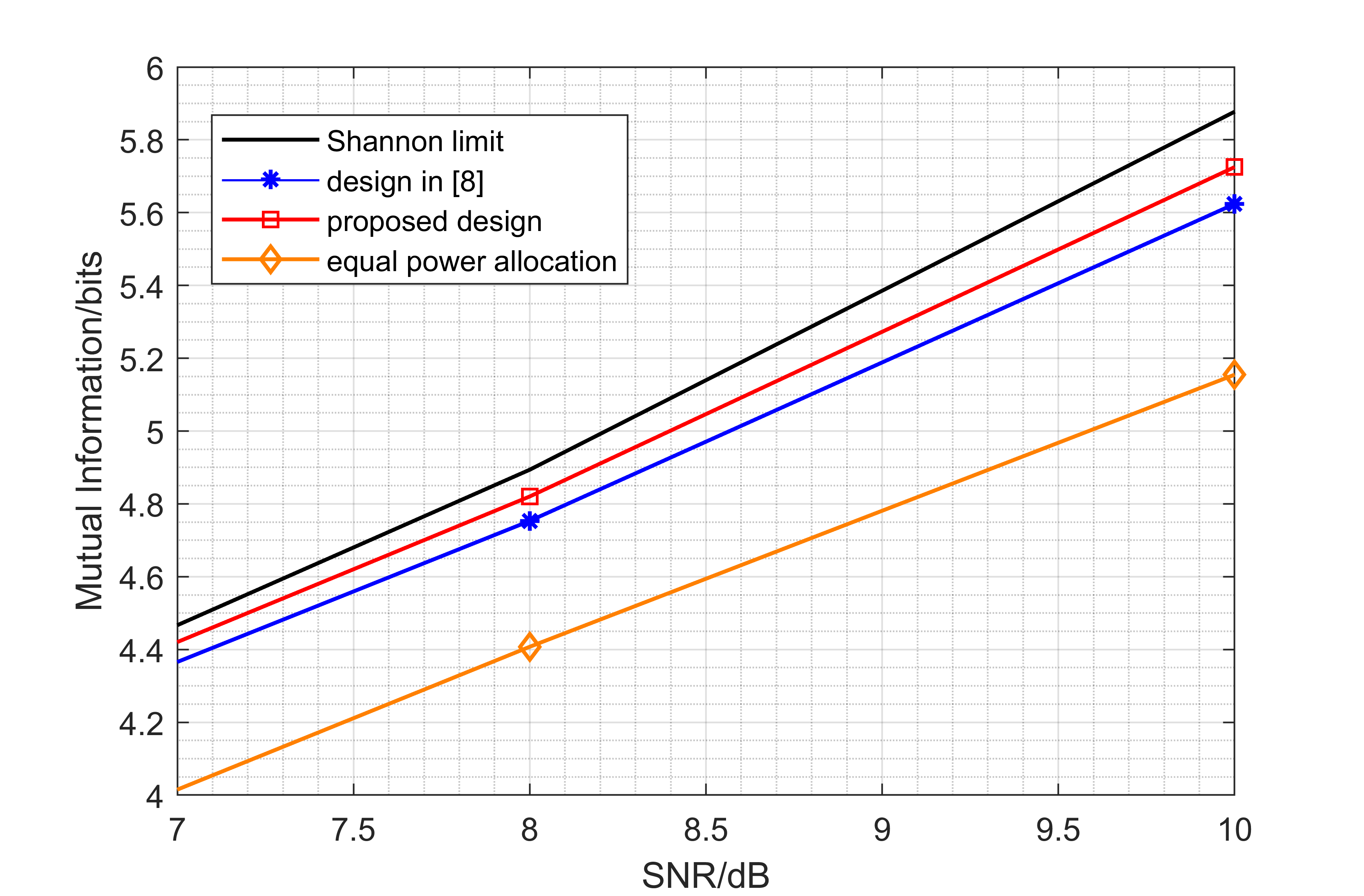}}
	\caption{Mutual information of the diagonal channel matrix $\mathbf{\Sigma_H}$ as a function of SNR with Gaussian and 16-QAM input.}
	\label{FigI}
\end{figure}
Fig. 1. plots the resulting values of mutual information of model \eqref{model_2}
as functions of the average SNR. Apart from our strategy, we also evaluate the Shannon Limit, the strategy applying optimal precoder proposed in \cite{Xiao}, and the strategy allocating equal power for parallel channels. It is illustrated that our strategy outperforms the existing optimal strategy \cite{Xiao}. Though optimizing the linear precoder already brings 1-2 dB gain in mid SNR compared with equal power allocation, the proposed design brings additional 0.2 dB SNR gain. And the gain will become higher for higher modulations \cite{higherMod}.

Fig. 2 shows the proportion of the transmitting power allocated to the stronger channel in different power allocation policies. We calculate the result of our proposed strategy and compare it with classic waterfilling\cite{Telatar}, mercury-waterfilling\cite{MWF} for uniformly distributed 16-QAM and equal power allocation. Contrary to mercury-waterfilling, the proposed design tends to allocate more power for the stronger channel while less for the weaker channel for $\mathrm{SNR>9dB}$. This is because the optimization of the the input distribution with PAS and unitary matrix $\mathbf{\Phi}$ makes the input signal more Gaussian distributed. Since the power allocation results in the proposed strategy are closer to classic waterfilling, we can apply classic waterfilling policy to initialize $\mathbf{\Sigma_G}$ in \textbf{Algorithm 1} for faster convergence.

\begin{figure}[tbp]
\centerline{\includegraphics[width=3.3in]{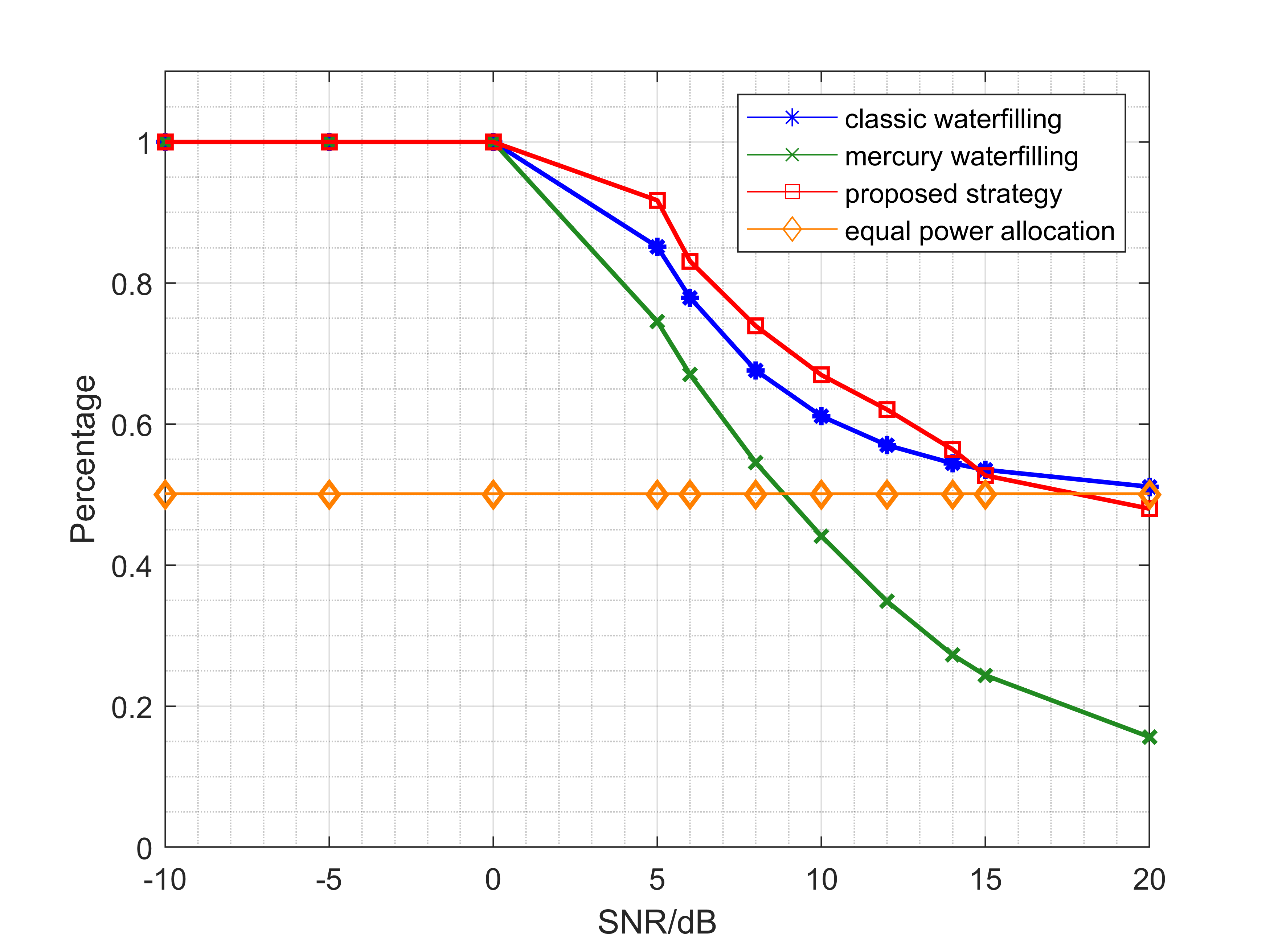}}
\caption{The percentage of the transmitting power allocated to the stronger channel of $\mathbf{\Sigma_H}$ as a function of SNR with 16QAM input .}
\label{FigI}
\end{figure}
\begin{figure}[tbp]
	\centerline{\includegraphics[width=3.3in]{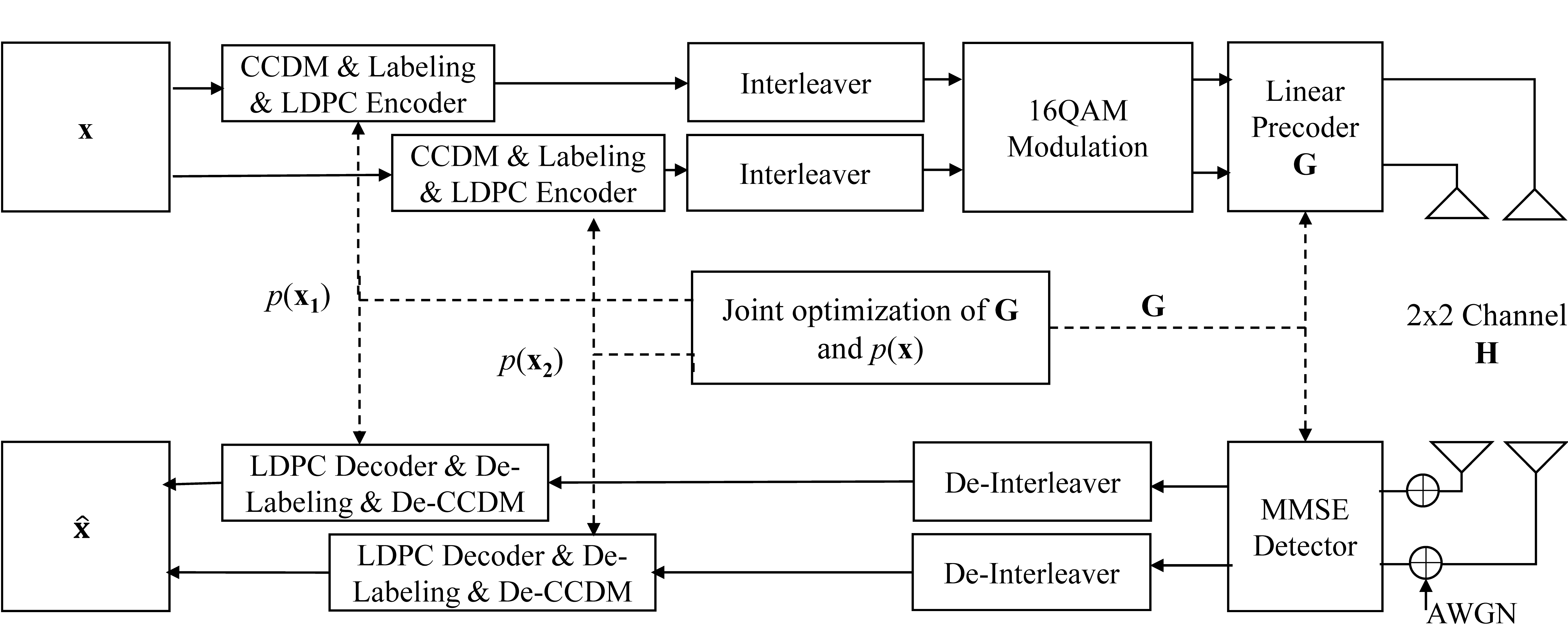}}
	\caption{The transceiver model of $2\times 2$ MIMO system with 16QAM input.}
	\label{FigI}
\end{figure}
\subsection{Rayleigh fading MIMO channel}
We also evaluate our design in $2\times2$ Rayleigh fading channels. The $2\times 2$ MIMO transceiver model for evaluating the FER performance with different transmitting strategies is displayed in Fig. 3. In our simulation, CCDM and DVS-S2 LDPC codes are combined to achieve 1/2 overall code rate transmission under 16QAM modulation. Although a capacity-achieving receiver is necessary to reach the Shannon limit, its high computational complexity makes it hardly used in practical. Thus, we simply apply the non-iterative MMSE detector at the receiver. Despite this compromise, the optimization of mutual information can still guide us to increase system capacity when designing transmission strategies.

The mutual information and FER curves respectively plotted in Fig. 4 and Fig. 5  are averaged from  $2\times2$ Rayleigh fading channels. We can observe from Fig. 4 that the proposed design achieves about 0.2 dB gain compared with the design in \cite{Xiao} and 1 dB gain compared with equal power allocation. And the proposed design even shows better performance in FER simulations. Fig. 5 elaborates that our proposed design brings about 1.2 dB and 3 dB gain comparing to the design in \cite{Xiao} and equal power allocation, respectively.

\begin{figure}[tbp]
	\centerline{\includegraphics[width=3.3in]{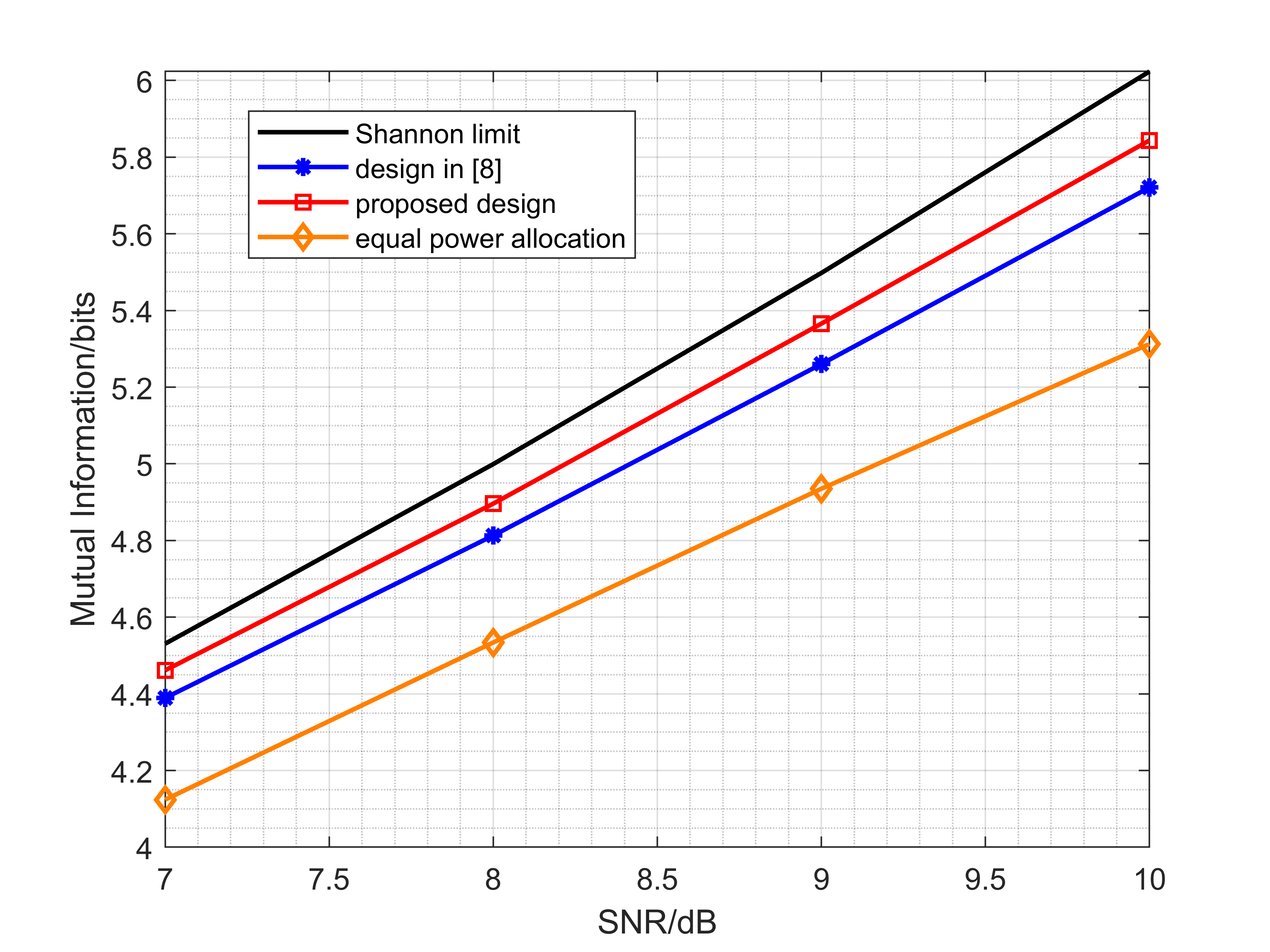}}
	\caption{The average mutual information of $2\times2$ Rayleigh fading channels as a function of SNR with Gaussian and 16-QAM input.}
	\label{FigI}
\end{figure}
\begin{figure}[tbp]
\centerline{\includegraphics[width=3.3in]{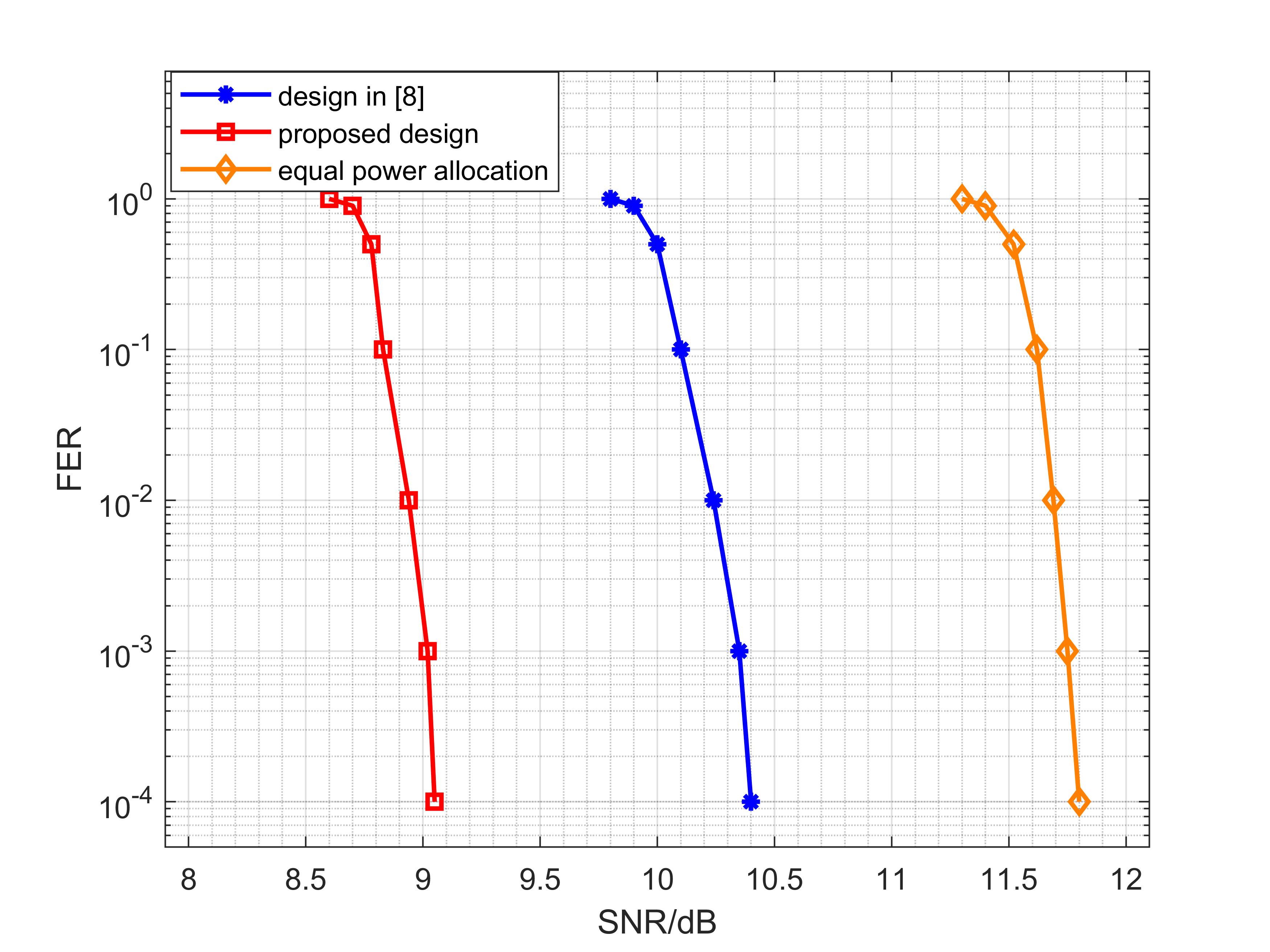}}
\caption{The average FER performance of Rayleigh fading $2\times2$ channels as a function of SNR with 16QAM input .}
\label{FigI}
\end{figure}

\vspace{1.5 ex}
\section{Conclusion}

In this paper, we study the MIMO transmission under discrete input signals constraints and present a unified framework with linear precoders and non-uniformly distributed input signals. First, the system model is defined. Based on this model, a joint optimization algorithm is proposed to solve the mutual information maximization problem. Next, We evaluate our strategy and other existing strategies for comparison. Numerical results indicate that the proposed strategy performs best. When evaluated over $2\times2$ Rayleigh fading channels with 16QAM input and the MMSE detector, our strategy achieves an SNR gain about 1.2 dB, comparing to the strategy only with the optimal precoder, averagely. In the future, we expect to evaluate the proposed strategy under more practical MIMO system constraints.


\begin{thebibliography}{10}
\bibitem{Shannon} C. E. Shannon, ``A mathematical theory of communications," \emph{Bell Sys. Tech. J.}, vol.27, pp.379-423, 623-565, July, Oct. 1948.
\bibitem{Shape1} D. Raphaeli and A. Gurevitz, ``Constellation shaping for pragmatic
turbocoded modulation with high spectral efficiency," \emph{IEEE Trans.
	Commun.},vol. 52, no. 3, pp. 341-345, Mar. 2004.
\bibitem{Shape2} M. Yankov, S. Forchhammer, K. J. Larsen, and L. P. Christensen,
``Rateadaptive constellation shaping for near-capacity achieving turbo
coded BICM," in \emph{Proc. IEEE Int. Conf. Commun. (ICC)}, 2014, pp. 2112-
2117.
\bibitem{Shape3}G. D. Forney, ``Trellis shaping," \emph{IEEE Trans. Inf. Theory,} vol. 38, no. 2, pp. 281-300, Mar. 1992.

\bibitem{PAS} G. B\"ocherer, F. Steiner, and P. Schulte, ``Bandwidth efficient and ratematched low-density parity-check coded modulation," \emph{IEEE Trans. on Commun.}, vol. 63, no. 12, pp. 4651-4665, Dec. 2015.
\bibitem{CCDM} P. Schulte and G. B\"ocherer, ``Constant composition distribution matching,'' \emph{IEEE Trans. Inf. Theory}, vol. 62, no. 1, pp. 430-434, Jan. 2016.
\bibitem{MIMO1} F. P\'erez-Cruz, M. R. D. Rodrigues, and S. Verd\'u, ``MIMO Gaussian
channels with arbitrary input: Optimal precoding and power allocation,"
\textit{IEEE Trans. Inf. Theory}, vol. 56, pp. 1070-1084, Mar. 2010.
\bibitem{Xiao} C. Xiao, Y. R. Zheng, and Z. Ding, ``Globally optimal linear precoders for finite alphabet signals over complex vector Gaussian channels," \textit{IEEE Trans. Signal Process.}, vol. 59, pp. 3301-3314, Jul. 2011.
\bibitem{MIMO2} S. K. Mohammed, E. Viterbo, Y. Hong, and A. Chockalingam, ``Precoding
by pairing subchannels to increase MIMO capacity with discrete
input alphabets," \textit{IEEE Trans. Inf. Theory}, vol. 57, pp. 4156-4169,
Jul. 2011.
\bibitem{MIMO3} Y. Wu, D. W. K. Ng, C. Wen, R. Schober, and A. Lozano, ``Low-complexity MIMO precoding for finite-alphabet signals," \textit{IEEE Trans. Wireless Commun.}, vol. 16, no. 7, pp. 4571-4584, July 2017.
\bibitem{MIMO4} T. Ketseoglou and E. Ayanoglu, ``Linear precoding for MIMO with
LDPC coding and reduced complexity,''\textit{ IEEE Trans. Wireless Commun.},
vol. 14, pp. 2192-2204, Apr. 2015.
\bibitem{Telatar} I.E. Telatar, ``Capacity of multi-antenna Gaussian channels," \emph{Eur. Trans. Telecom.}, vol.10, pp.585-595, Nov. 1999.

\bibitem{MWF} A. Lozano, A. M. Tulino, and S. Verdu, ``Optimum power allocation
for parallel Gaussian channels with arbitrary input distributions," \emph{IEEE
Trans. Inf. Theory}, vol.52, pp.3033-3051, July 2006.
\bibitem{ConvexOptimization} S. Boyd and L. Vandenberghe, \emph{Convex Optimization}, Cambridge, U.K.: Cambridge Univ. Press, 2004.
\bibitem{GradientProjection} D. P. Bertsekas, \emph{Nonlinear programming}, 3rd ed., Athena Scientific, Belmont, MA, 2016.
\bibitem{opU} M. Payaro and D. P. Palomar, ``On optimal precoding in linear vector
gaussian channels with arbitrary input distribution," in \emph{Proc. IEEE Int.
Symp. Inform. Theory}, pp.1085-1089, 2009.

\bibitem{manifold}  P. A. Absil, R. Mahony, and R. Sepulchre, \emph{Optimization Algorithms on Matrix Manifolds,} Princeton University Press, 2008.
\bibitem{SD}  T. Abrudan, J. Eriksson, and V. Koivunen, ``Efficient Riemannian algorithms for optimization under unitary matrix constraint," in \emph{IEEE Proc. IEEE Int. Conf. Acoust., Speech, Signal Processing}, Las Vegas, NV, 2008, pp. 2353-2356.

\bibitem{MB} F. R. Kschischang and S. Pasupathy, ``Optimal nonuniform signaling for
Gaussian channels," \emph{IEEE Trans. Inf. Theory,} vol. 39, no. 3, pp. 913-929,
May 1993.

\bibitem{block} Y. Wu, C. Wen, D. Wing Kwan Ng, R. Schober, and A. Lozano, ``Low-complexity MIMO precoding with discrete signals and statistical CSI," in \textit{Proc. IEEE Int. Conf. Commun. (ICC),} Kuala Lumpur, 2016, pp. 1-6.
\bibitem{higherMod} G. Forney, R. Gallager, G. Lang, F. Longstaff, and S. Qureshi, ``Efficient modulation for band-limited channels," \textit{IEEE J. Sel. Areas Commun.}, vol. 2, no. 5, pp. 632-647, Sep. 1984.

\end{thebibliography}
\end{document}